\newtheorem{prop}{Proposition}
\newtheorem{thm}{Theorem}
\newtheorem{lem}{Lemma}
\newtheorem{nota}{Notation}
\newtheorem{obs}{Observation}
\title{What Makes the Recognition Problem Hard for Classes Related to Segment and String graphs?}
\author{Irina Musta\textcommabelow t\u a\affiliationmark{1} and Martin Pergel\affiliationmark{2}
\thanks{Partially supported by the Czech Science Foundation grant GA19-08554S.}}
\affiliation{Berlin Institute of Technology, Institut f\"ur Mathematik, MA 2-2, Stra{\ss}e des 17. Juni 136, 10623 Berlin, Germany; funded by Berlin Math. School\\
Department of Software and Computer Science Education (KSVI),
Charles University Prague.}
\begin{document}

\maketitle
\begin{abstract}
We explore what could make recognition of particular intersection-defined classes hard. We focus mainly on unit grid intersection graphs (UGIGs), i.e., intersection graphs of unit-length axis-aligned segments and grid intersection graphs (GIGs, which are defined like UGIGs without unit-length restriction) and string graphs, intersection graphs of arc-connected curves in a plane.

We show that the explored graph classes are NP-hard to recognized even when restricted on graphs with arbitrarily large girth, i.e., length of a shortest cycle. As well, we show that the recognition of these classes remains hard even for graphs with restricted degree (4, 5 and 8 depending on a particular class). For UGIGs we present structural results on the size of a possible representation, too.
\end{abstract}
\section{Introduction}
This article is a full version of an extended abstract \cite{ugigeurocomb}. Here, we provide mainly the details of the constructions promised there.
Geometric intersection graphs are graphs with a geometric representation where each vertex is represented by a geometric object and the adjacency of a pair of vertices corresponds to the fact that the objects have a nonempty intersection. They are a practically important part of graph theory as some generally hard problems become efficiently solvable on some intersection classes (e.g., \cite{ccl}). These efficient algorithms usually require an intersection representation instead of a graph, which motivates the recognition problem, i.e., the question of whether a given graph has a desired intersection representation.

Our attention focuses on unit grid intersection graphs, called UGIG, i.e., intersection graphs of unit-length axis-aligned non-overlapping line segments in the plane. Our results have influence even on other graph-classes, namely on grid intersection graphs that we denote as GIG defined similarly to UGIG, just without the restriction to unit length \cite{Steve,HNZ,UEH}. Grid intersection graphs have been studied also as PURE-2-DIR or B0-VPG. Obviously UGIGs form a subclass of GIGs. GIGs can be generalized to segment graphs, intersection graphs of straight-line segments in the plane. The topological versions of segment graphs are pseudosegment graphs where pseudosegments do not have to be straight, they just have to keep the topological properties of segments \cite{Fel}. Even this class can be generalized to the class of string graphs, intersection graphs of arc-connected curves in the plane \cite{Krat}. Many geometric intersection graph classes were defined in last years. Although it would be worth to mention more graph classes (like, e.g., polygon-circle graphs, graphs of interval filaments, circle graphs and many other), we are focusing already on relatively many classes. Therefore, we try to avoid introducing classes related to our results only losely where it is possible.

As there exist many classes of intersection graphs, it is useful to avoid exploring each class separately. For optimization problems, typically an efficient algorithm for a superclass yields an algorithm for subclasses. For the recognition problem this is not true. Due to this fact, {\em sandwiching} was introduced \cite{KP}. It is an approach close to approximability/inapproximability. A class ${\cal B}$ is sandwiched between classes $\cal A$ and $\cal C$, if ${\cal A}\subseteq {\cal B}\subseteq{\cal C}$. Given a reduction that produces either graphs from $\cal A$ or not even graphs from $\cal C$, this reduction shows hardness for all classes sandwiched between $\cal A$ and $\cal C$.

About the recognition of the mentioned classes, many results are known. The recognition problem of string graphs is NP-complete even for graphs without triangles \cite{SSS,KratII}, for segment and pseudosegment graphs the problem is known to be NP-hard even for graphs with arbitrarily large girth \cite{KP}, it is not known whether the problem is in NP and Cartesian coordinates of endpoints of particular segments in a representation cannot be used as a polynomial certificate of representability \cite{KM}. It is known that GIGs are NP-complete to get recognized \cite{Krat}; the same article shows that any class between 3-DIR (intersection graphs of straight line segments each using one of three permitted directions) and STRING (intersection graphs of arc-connected curves in a plane) is NP-hard to recognize.

Because for many intersection-defined classes the recognition problem is hard, we are trying to explore the reasons. The idea behind graphs without triangles or with high girth is that it is some form of density (of edges) that makes the recognition hard. This holds, e.g., for polygon-circle graphs, not, e.g., segment graphs \cite{KP}. As the "density" tries to restrict the number of edges, we tried to restrict the classes further to get polynomially recognizable classes, namely by bounding the maximum degree (as it efficiently bounds the number of edges in the graph). However, instead of finding polynomially recognizable classes, we show that all explored classes are rather resistant to this criterion.


\section{The Results}

Our main result whose proof we sketch in the following section is:
\begin{thm}\label{thm:iff}
Between the following three pairs of graph-classes no polynomially recognizable class can be sandwiched even when restricted on graphs with arbitrarily large girth and yet on graphs with the maximum degree (at most) 5, 4 and 8, respectively:
\begin{itemize}
\item UGIG and pseudosegment graphs,
\item GIG and segment graphs,
\item GIG and string graphs.
\end{itemize}
\label{thm:main}
\end{thm}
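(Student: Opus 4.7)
The plan is to prove all three items of the theorem through one meta-reduction from a suitably restricted NP-hard variant of satisfiability (such as planar 3-SAT with bounded occurrences per variable) which produces, on a satisfiable instance, a graph in the smaller class of the sandwich (UGIG for items one and two, GIG for the third) and, on an unsatisfiable instance, a graph that fails to lie even in the corresponding larger class (pseudosegment, segment, resp.\ string graphs). Because the sandwich notion is closed under intersecting with the property ``girth at least $g$'' and the property ``maximum degree at most $\Delta$'', it suffices to carry out the reduction in such a way that the produced graph is always guaranteed to satisfy these two extra restrictions; hardness for any class sandwiched between the named pairs will then follow in the standard manner of \cite{KP}.

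The core of the construction is a pair of gadgets, built as explicit UGIG (resp.\ GIG) drawings. A \emph{variable gadget} admits essentially two UGIG representations, to be identified with the two truth values of the variable, and emits literal wires of the appropriate polarity. A \emph{clause gadget} glues three literal wires and has a UGIG (resp.\ GIG) representation if and only if at least one of the attached literals is true. The gadgets are interconnected according to a planar layout of the variable--clause incidence graph, which keeps the local crossing structure under control. Completeness is then proved by explicitly exhibiting the desired UGIG/GIG drawing from any satisfying assignment; soundness is proved by showing that if all clauses are violated, the local topology forced by the gadgets produces an obstruction that is essentially a subdivision of a fixed non-planar configuration with prescribed rotation constraints at its vertices, which no planar arrangement of arc-connected curves can realize and hence which rules out even a string representation.

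Two orthogonal local modifications then produce the girth and degree bounds. To force girth $\geq g$, each inter-gadget wire is subdivided into a chain of length at least $g$. On the yes side this is realized in UGIG/GIG by a zig-zag chain of segments; on the no side the obstruction is, by construction, a topological minor of the original graph and therefore survives any subdivision and continues to forbid a string realization. To force maximum degree at most $\Delta$, every gadget junction of higher degree is replaced by a small constant-size \emph{splitter} built of UGIG/GIG segments whose admissible representations have been checked exhaustively to transmit the Boolean signal in the unique intended way. The explicit numbers $5$, $4$ and $8$ in the three items arise directly from the worst-case local degree inside these splitters in the three target settings.

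The main obstacle, as I see it, is the simultaneous satisfaction of the girth and degree requirements. Once every inter-gadget wire is long, metric arguments (which are natural in the unit-segment world) can no longer be invoked to certify non-representability in the more permissive string/pseudosegment classes; the obstruction from the soundness step must therefore be \emph{purely topological} and robust under all the subdivisions introduced in the girth step. Dually, the splitters used to cut down the degree must themselves be rigid enough to force the intended Boolean behaviour, yet sparse enough not to reintroduce short cycles and not to admit parasitic representations in the larger class. Designing a family of gadgets that jointly achieves rigidity in the smaller class, topological non-realizability in the larger class, subdivision-stability, and low local degree is where the bulk of the case analysis will live, and this is what pins down the three specific degree bounds of the theorem.
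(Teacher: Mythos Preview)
Your proposal is essentially the paper's approach: reduction from planar 3-connected 3-SAT with bounded variable occurrence, variable and clause gadgets connected by pairs of paths (your ``wires''), girth enforced by making those paths arbitrarily long, degree lowered by local truth-splitter gadgets, and soundness argued purely topologically by a half-segment/Jordan-curve case analysis inside an unsatisfied clause. One small slip: for the second item the yes-instances are exhibited only as GIG, not UGIG---the truth-splitter that brings the degree down to $4$ uses non-unit segments---which is precisely why the degree bound for the UGIG sandwich stays at $5$.
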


Conversely, we may state an observation that these graph classes are polynomially (or even trivially) recognizable for graphs with maximum degree 2, as such graphs consist of disjoint union of paths, cycles and isolated vertices. All such graphs either have the appropriate representation (by a class containing at least 3-DIR, i.e., class representable by segments in 3 directions), or, for UGIG, it might be necessary to verify whether the graph is bipartite. Provided that the input graph is bipartite for UGIGs, the recognition algorithm may just answer "yes". 

Note that between UGIG and GIG a sophisticated (hard) class can be established, e.g., length of an odd cycle can encode a program for universal Turing machine and the class contains just cycles encoding the Halting problem making such class algorithmically not recognizable. So, for classes sandwiched between UGIG and 3-DIR we must be careful even for graphs with maximum degree 2.

\begin{table}
\centering
{\bf Overview of the results}
\begin{tabular}{l l l l l l}
Class/Degree & 2 & 3 & 4 & 5, 6, 7 & 8\cr
UGIG & P & ? & ? & NP-complete & NP-complete\cr
GIG & P & ? & NP-complete & NP-complete & NP-complete\cr
... & any & ? & NP-hard & NP-hard & NP-hard\cr
SEG & P/trivial & ? & NP-hard & NP-hard & NP-hard\cr
... & P/trivial & ? & NP-hard & NP-hard & NP-hard\cr
PSEG & P/trivial & ? & NP-complete & NP-complete & NP-complete\cr
... & P/trivial & ? & ? & ? & NP-hard\cr
string & P/trivial & ? & ? & ? & NP-complete\cr
	string3 & P/trivial & ? & {\it NP-complete\cite{Krat}} & {\it NP-complete} &{\it NP-complete}\cr
\end{tabular}
\caption{Results in italics are not our ones. The class string3 are string graphs without the restriction on girth, i.e., we permit even triangles. Other results hold even for graphs with arbitrarily large (constant) girth. Three dots as the class-name mean any class sandwiched between the neighboring classes. As the class may be created artificially, it can be arbitrarily hard. Therefore, we can claim something only about its hardness (or triviality).}
\end{table}

Our results answer a question of Kratochv\'\i l and Pergel \cite{KP} whether string graphs with sufficiently large girth can be recognized in a polynomial time. Our construction shows not only that large girth does not help. Moreover, we show that even graphs with relatively low maximum degree are hard to get recognized.

In the rest of this section we show upper and lower bound on the size of a UGIG representation:
\begin{figure}[htbp]
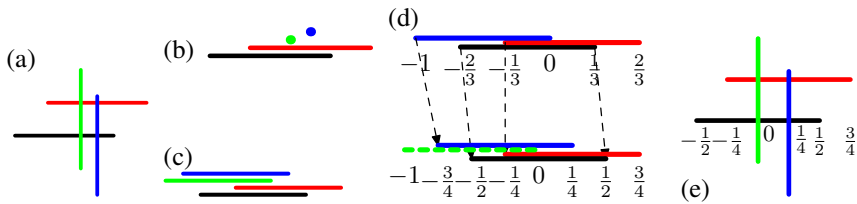

\hbox{\vbox{\hsize=2cm (a)\vskip -1mm ~\includegraphics[width=1.8cm]{obrazky.6}}  \hfill
 \vbox{\hsize=2.9cm  (b) ~\includegraphics[width=2.2cm]{obrazky.7}
\vskip 1cm
(c) ~\includegraphics[width=2.4cm]{obrazky.8}} \hfill 
\vbox{\hsize=3.8cm (d) 
\includegraphics[width=3.5cm]{obrazky.9}}
\hfill \vbox{\hsize=2.4cm (e)\hskip -4mm 
\includegraphics[width=2.4cm]{obrazky.11}}}~
\caption{Considering the arrangement from picture (a) we make a projection onto one axis (b), i.e. perpendicular segments collapse into single points. We extend them in one direction (in our case to the left), see picture (c). Picture (d) shows the last iteration of the fourth step of the sweeping algorithm, i.e. extending the so far obtained representation. Picture (e) shows the final representation (after having also swept along the $y$-axis)}
\label {fig:sweep}
\end{figure}

\begin{prop}
Any unit grid intersection graph on $n$ vertices can be represented in a grid $(n+1)\times (n+1)$ with all coordinates being multiples of $\frac{1}{n}$. Moreover, we can find such a representation that with respect to each axis no two segments have the same non-integral part of the coordinate.
\label{prop:canon}
\end{prop}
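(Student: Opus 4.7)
The plan is to normalize the representation axis-by-axis, exploiting that the horizontal--vertical intersection condition decouples into independent x- and y-conditions. I describe the x-axis normalization; the y-axis is handled identically and combined via this independence at the end.

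First I would perturb the given representation so that the $n$ x-axis ``anchors'' -- the left endpoint of each horizontal segment and the x-coordinate of each vertical segment -- are pairwise distinct and also distinct from every ``right endpoint'' (anchor $+1$) of a horizontal. Sort these anchors as $u_1<u_2<\cdots<u_n$, and for each horizontal index $j$ let $r_j$ be the number of anchors in $(u_j,u_j+1)$; these are automatically $u_{j+1},\ldots,u_{j+r_j}$. A key combinatorial observation is that $j\mapsto j+r_j$ is non-decreasing, since sliding $(u_j,u_j+1)$ rightwards to $(u_{j+1},u_{j+1}+1)$ can only lose the anchor $u_{j+1}$ from the left, giving $r_{j+1}\ge r_j-1$.

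Then I would seek new coordinates of the form $u'_k=\phi(k)/n+\psi(k)$, where $\psi(k)\in\mathbb{Z}_{\ge 0}$ is an ``integer part'' and $\phi$ is a bijection $\{1,\ldots,n\}\to\{0,1,\ldots,n-1\}$ providing the ``fractional-part class''. Define $\psi$ as the pointwise smallest non-decreasing sequence satisfying $\psi(j+r_j+1)\ge\psi(j)+1$ for every horizontal $j$ with $j+r_j<n$; the monotonicity of $j+r_j$ forces $\psi$ to grow by at most one per step and $\psi(j+r_j)\le\psi(j)+1$ for horizontal $j$, so in particular $\psi(n)\le n-1$. Writing $B_i:=\psi^{-1}(i)$ for the resulting ``blocks'', choose $\phi$ to be strictly increasing with index inside every block and, across consecutive blocks $B_i,B_{i+1}$, require for each horizontal $j\in B_i$ reaching into $B_{i+1}$ that $\phi(k)<\phi(j)$ for $k\in(j,j+r_j]\cap B_{i+1}$ and $\phi(k)>\phi(j)$ for $k=j+r_j+1$ whenever this index lies in $B_{i+1}$. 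A short case analysis on $\psi(k)-\psi(j)\in\{0,1\}$ then shows that $u'_k<u'_j+1$ iff $k\le j+r_j$, so the x-axis intersection pattern is preserved; meanwhile all $u'_k$ lie in $[0,n]$, all right endpoints in $[0,n+1]$, and the fractional parts $\phi(k)/n$ are distinct by bijectivity of $\phi$.

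The step I expect to be the main technical obstacle is verifying that the partial order on $\{1,\ldots,n\}$ defining $\phi$ is acyclic. Two-element and three-element cycles are directly ruled out by the monotonicity of $j+r_j$ (e.g.\ if horizontals $j<j'$ lie in the same block $B_i$ and both reach into $B_{i+1}$, then $j+r_j\le j'+r_{j'}$, so their reach-prefixes in $B_{i+1}$ nest, preventing contradictory ``$<$'' and ``$>$'' demands on the same element). Longer cycles reduce to these shorter ones using that every $\prec$-step changes $\psi$ by at most one, so any cycle must ultimately return to its starting block, where the same prefix-nestedness forbids the needed crossing. Once $\phi$ is fixed I would run the analogous construction on the y-axis and combine, obtaining the claimed $(n+1)\times(n+1)$ grid representation with all coordinates in multiples of $1/n$ and pairwise distinct non-integral parts on each axis.
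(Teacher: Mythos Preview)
Your construction is plausible and targets the right invariant (preserve, for each horizontal $j$ and each other anchor $k$, whether $u_k$ lies in $(u_j,u_j+1)$), but it is a genuinely different route from the paper's. The paper does not separate integer and fractional parts at all: after perturbing and artificially extending each vertical's $x$-point to a unit interval, it simply performs a left-to-right greedy sweep, placing each segment's left endpoint at the smallest unused multiple of $1/n$ consistent with the overlap/disjointness pattern of the already-placed segments, and then repeats on the $y$-axis. The ``distinct non-integral parts'' and the $(n{+}1)$ bound fall out of the pigeonhole observation that there are only $n$ fractional classes and the worst per-step jump is $1+1/n$.

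What each buys: the paper's sweep is shorter and entirely algorithmic; correctness is essentially ``there is always a free slot because there are $n$ classes and fewer than $n$ segments already placed''. Your $\psi/\phi$ decomposition is more structural and makes the integer/fractional split explicit, which is nice, but it trades the paper's one-line slot argument for a nontrivial acyclicity claim on the partial order defining $\phi$. Your sketch of that claim (nesting of reach-prefixes via monotonicity of $j\mapsto j+r_j$) is the right idea, but as written it is only an outline: you would still need to (i) make precise what ``reaching into $B_{i+1}$'' means so that the case $j+r_j\in B_i$ but $j+r_j+1\in B_{i+1}$ is covered, and (ii) turn the ``longer cycles reduce to shorter ones'' remark into an actual induction. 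None of this looks wrong, but it is considerably more work than the paper's sweep for the same conclusion.
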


Note that the lower bound for the granularity (multiples of $\frac{1}{n}$) is tight because of $K_{1,n-1}$ which requires this precision for the (distinct) coordinates, as all $n-1$ segments of one part need to intersect (not just touch) the same segment of the other part and it is prohibited that two segments start on the same coordinate.

\begin{proof}
Without loss of generality, in the sequel we will describe  the procedure for the $x$-axis. Projecting this arbitrary representation of $G$ on the $x$-axis, we obtain a sequence of intervals (corresponding to the vertices in $H$) and points (corresponding to the vertices in $V$). By eventually employing small perturbations, it can be assumed that the projected elements are all in general position: no interval is degenerated, no two endpoints coincide (here we treat the projection points stemming from $V$ as endpoints). 

To simplify the following sweeping-argument, we extend individual points to unit-segments (in an arbitrary fixed way along the $x$ axis). Thus we obtain an arrangement of unit segments. We build the canonical representation by performing a sweep from  left to  right and employing the following steps:

\begin{itemize}
\item For the left-most segment we assign to the right endpoint the coordinate $0$ (hence the left endpoint now has coordinate $-1$). This segment is now the reference from which the rest of the construction will follow.
\item We process the segments in their left to right ordering where each newly added segment  has its left endpoint assigned to the next free 'slot': namely, the smallest multiple of $\frac{1}{n}$ that is free, such that the overlapping (or disjointness) conditions to the previously assigned segments are not violated. This is illustrated in  Figure \ref{fig:sweep}. Note that due to the chosen granularity it is always possible to find such a free slot. In the worst case, i.e. the segment to be added is disjoint to all the previous ones, the needed increment for the left coordinate is $1+\frac{1}{n}$. This leads to an upper bound of $n+1$ large bounding size per coordinate (which the initially mentioned example fulfills, note however, this is not an optimal representation).
\item In the next step, the projection on the $y-coordinate$ is processed analogously.
\item After the double sweep, all relevant coordinates being known, one can draw the final configuration. Note that the above procedure does not change the relative position of the endpoints and projected segments, hence not contradicting the adjacency and non-adjacency conditions. 
\end{itemize}
\end{proof}

\begin{figure}[htbp]
\centering
\includegraphics[width=8cm]{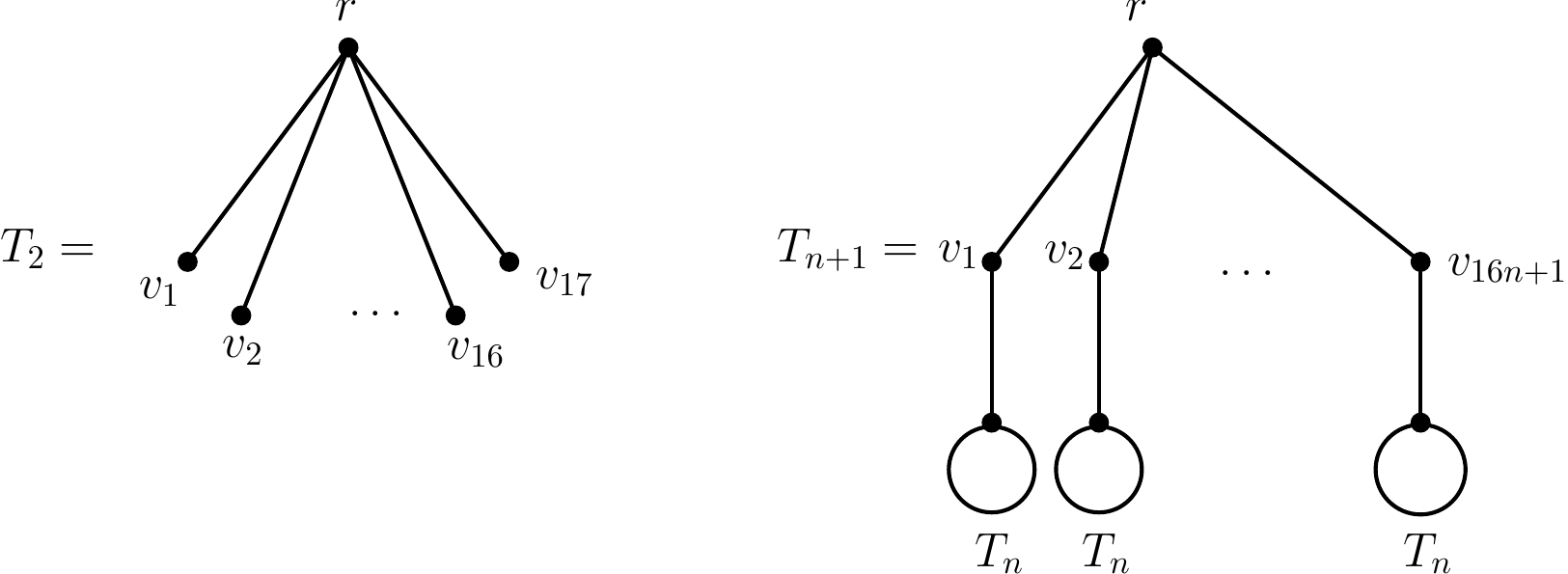}
\caption{Constructing the family of trees}
\label{tfamily}
\end {figure}

In the sequel, we define the \emph{boundary size} as the semiperimeter of the bounding rectangle and prove that:

\begin{thm} \label{thm:treeunbound}
 For all $n\geq 2$, a UGIG representation of $T_n$ needs a boundary size of at least $n$.
 \end{thm}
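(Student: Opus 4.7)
The plan is to argue by induction on $n$, exploiting the recursive construction of the family $T_n$ suggested by Figure \ref{tfamily}. The base case $n=2$ should be a direct verification: the smallest member $T_2$ already contains a vertex that forces at least one unit-length segment in each axis direction, pinning the boundary size at $2$.

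For the inductive step, I would first observe that since $T_n$ is a tree, any UGIG representation has a forced bipartition of the segments into horizontal and vertical ones, namely the two color classes of $T_n$, unique up to the global swap. Hence every path in $T_n$ is drawn as a sequence of segments of alternating orientation. Inside $T_n$ I would identify a backbone path $v_0 v_1 \cdots v_{k}$ with $k \geq n-1$ together with a pendant ``gadget'' attached to each backbone vertex $v_i$, the whole of which is added to $T_{n-1}$ to form $T_n$. The key local claim is that at each backbone vertex $v_i$ the pendant gadget forces the unit segment representing $v_i$ to extend strictly beyond the bounding rectangle of the induced representation of $T_{n-1}$: if $v_i$'s free endpoint (the one not meeting $v_{i-1}$'s segment) lay inside that rectangle, then some segment of the gadget would be forced either to leave the rectangle or to cross a segment representing a non-neighbour of it in $T_n$, contradicting the tree structure.

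The main obstacle is precisely this local claim, i.e.\ ruling out a ``folded back'' placement that reuses width already paid for by $T_{n-1}$. I expect to establish it by a small case analysis exhibiting, in each potential folded configuration, two vertices of $T_n$ that are non-adjacent in the tree yet whose unit segments would be forced to share a point; the pendant gadget is designed so that these bad intersections cannot be avoided. Once the local claim is in place, the inductive step follows: the new backbone step contributes a fresh unit-length ``tab'' to either the horizontal or vertical extent of the representation, so the semiperimeter grows by at least $1$ in passing from $T_{n-1}$ to $T_n$. Combined with the base case this yields boundary size at least $n$, as required, and it also matches the intuitive tightness coming from the bipartite alternation of segment orientations along the backbone.
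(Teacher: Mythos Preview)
Your outline rests on a picture of $T_n$ that does not match the actual family. In the paper, $T_{n+1}$ is not obtained from $T_{n-1}$ by grafting a single backbone path with local pendants; instead the root of $T_{n+1}$ has $16n+1$ children, each of which has further descendants carrying a full copy of $T_n$. This high branching factor is the engine of the proof: by repeated pigeonhole one extracts $4n+1$ children whose two-step paths all go, say, down and to the right of the root, so that these paths form a \emph{nested} family of unit segments. The argument then splits on whether some attached $T_n$ stays below a certain grandchild segment~$l$ (induction gives size $n$, and the extra level above $l$ adds one more unit), or whether every attached $T_n$ must thread its way back above~$l$. In the latter case a short lemma shows that each successive escaping path is forced strictly lower than the previous one by at least one full unit, because a unit horizontal segment cannot be passed from left to right without a vertical segment dipping below it; with $2n+1$ such paths this alone yields vertical extent at least~$n$.

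Your proposed ``local claim'' --- that a pendant gadget at $v_i$ forces the segment of $v_i$ to protrude beyond the box for $T_{n-1}$ --- is exactly the step that fails for a path-plus-pendants model. Caterpillars, and more generally bounded-degree trees built along a spine with constant-size attachments, admit UGIG representations that fold back and forth inside a box of bounded semiperimeter: nothing local prevents reuse of width. The paper circumvents this not by a local obstruction but by a \emph{global nesting} argument that only becomes available once pigeonhole has aligned many branches in the same quadrant. So the missing idea is twofold: the correct recursive shape of $T_n$ (large fan-out, not a spine), and the nested-paths lemma that converts that fan-out into a guaranteed unit of extra depth.
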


We proceed inductively:

The base case $n=2$ is clear, as at least one unit is needed in both the horizontal and the vertical direction.

  For the induction step, we apply the pigeonhole principle several times. Out of the $16n+1$ children of the root, at least $8n+1$ have their children either all above the root, or all below. Without loss of generality, we assume the latter is the case. Out of these $8n+1$ nodes, either at least $4n+1$ have children with an endpoint to the left of the root, or $4n+1$ have children to the right of the root. W.l.o.g. we consider the latter case. Restricting our analysis to these $4n+1$ nodes and their successors, we see that the paths of length two descending from the root form a nested structure, similar to the one in Figure \ref{selvert}. 

\begin{figure}[htbp]
\centering
\includegraphics[width=4.5cm]{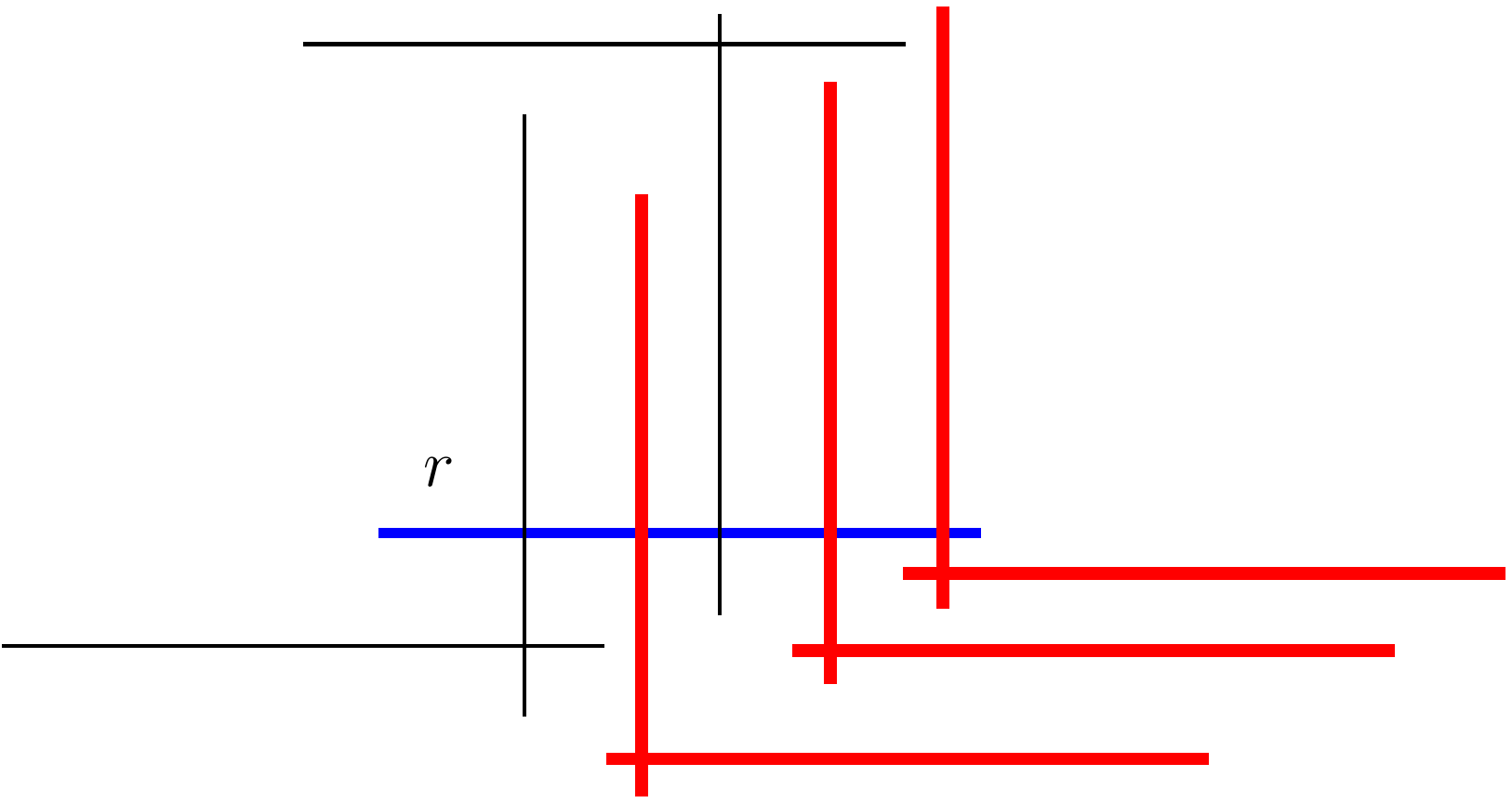}
\caption{Selecting the nested paths}
\label{selvert}
\end {figure}

Consider the second lowest lying child of the root and let $l$ be its descendant. We distinguish two cases:

\begin{itemize}
\item At least one of the attached $T_n$s  does not go above  $l$. Then, from the induction hypothesis, we need a boundary size of at least $n$ for the copy of $T_n$ and an extra unit in the vertical direction that lies above $l$. Hence, in total, a rectangle of semiperimeter at least $n+1$ is necessary.
\item All attached $T_n$s go above $l$. This can only happen via a vertical segment to the right or to the left of $l$. We apply the pigeonhole principle again, to conclude there exist $2n+1$ root children whose descending $T_n$s reach above $l$ from the right (the other case is analogous).
\end{itemize}

\begin{lem}
Let $u_1, \ldots u_{2n+1}$ be the top-down ordering of the descendants of the above $2n+1$vertices and $P_1, \ldots P_{2n+1}$ the corresponding paths that go above $l$. Then $y_{min} (horizontal(P_{2k})) < y_{min} (P_{2k-1})$  and $y_{min} (P_{2k+1}) < y_{min} (P_{2k})-1$ for all $k\geq 1$ (here $y_{min}(P_i)$ denotes the smallest $y$ coordinate of the path $P_i$). 
\end{lem}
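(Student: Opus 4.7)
The plan is to exploit the topological nesting of the $2n+1$ paths, which is forced by $T_n$ being a tree, together with the unit-length axis-aligned segment constraint. First I would argue that the representations of $P_1,\ldots,P_{2n+1}$ are pairwise point-disjoint: each $P_i$ lies inside the $T_n$-subtree attached to the distinct grandchild $u_i$, so any common point of two different paths would create an extra edge, contradicting that $T_n$ is a tree. Closing each path into a Jordan arc by extending its two ends to ``infinity'' (downward toward the $u_i$-region, and upward-right past $l$), one obtains $2n+1$ pairwise non-crossing polygonal arcs. Since the $u_i$ are ordered top-down and the upper ends of all paths exit above $l$ on the right in a compatible order, these arcs form a nested system, with $P_j$ lying strictly outside (below) $P_i$ whenever $j>i$.

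To obtain the first inequality, I apply this nesting to $P_{2k-1}$ and $P_{2k}$: the path $P_{2k}$ detours outside $P_{2k-1}$, so some part of $P_{2k}$ must lie beneath every point of $P_{2k-1}$; the alternation of horizontal and vertical unit segments guarantees that at least one horizontal segment of $P_{2k}$ attains a $y$-coordinate strictly less than $y_{min}(P_{2k-1})$, yielding $y_{min}(horizontal(P_{2k})) < y_{min}(P_{2k-1})$. For the second inequality I use the nesting for $P_{2k}$ and $P_{2k+1}$: the path $P_{2k+1}$ passes strictly beneath the lowest feature of $P_{2k}$ by means of a vertical segment whose top must lie strictly below $y_{min}(P_{2k})$, else it would intersect $P_{2k}$ and create an extra edge. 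Because this vertical segment has length exactly $1$, its bottom satisfies $y<y_{min}(P_{2k})-1$, giving $y_{min}(P_{2k+1})<y_{min}(P_{2k})-1$.

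The main obstacle is to make the Jordan-arc closure rigorous: the two endpoints of each $P_i$ are specified only up to their positions in the tree, and one must certify that the closing arcs can be drawn consistently so as to witness the claimed nesting. This should follow from the planarity of any grid intersection representation together with the iterated pigeonhole selection of the $u_i$, but it requires a careful case analysis on the positions of the free path-ends relative to $l$ and on whether the lowermost point of each $P_{2k}$ is attained on a horizontal segment or at the free endpoint of a vertical one; in both sub-cases the ``no extra edges'' constraint must be used to upgrade a non-strict inequality to a strict one.
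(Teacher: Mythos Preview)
Your overall plan---disjoint paths, nested Jordan arcs, then extract the two inequalities from the nesting plus the unit-length constraint---is exactly the route the paper takes, though the paper leaves the nesting entirely to a picture and states the rest in two sentences. So strategically you are aligned.

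There is, however, a real gap in your argument for the second inequality. You write that $P_{2k+1}$ must contain a vertical segment ``whose top must lie strictly below $y_{min}(P_{2k})$, else it would intersect $P_{2k}$.'' That implication is false as stated: the lowest point of $P_{2k}$ may be the dangling tip of a vertical segment at a single $x$-coordinate $x_0$, and a vertical segment of $P_{2k+1}$ placed at some $x_1\neq x_0$ can reach well above $y_{min}(P_{2k})$ without meeting $P_{2k}$ at all. In that scenario your ``top strictly below $y_{min}(P_{2k})$'' conclusion collapses, and with it the $-1$ drop. You sensed something was off here (your last paragraph flags a case split on whether the minimum is attained on a horizontal segment or a vertical endpoint), but the case split is not the fix.

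The paper sidesteps this by looking not at the lowest \emph{point} of $P_{2k}$ but at its lowest \emph{horizontal} segment, say at height $h$ and spanning $[a,a+1]$. For $P_{2k+1}$ to pass from the left of this segment to its right while staying disjoint from it, a single horizontal unit segment of $P_{2k+1}$ cannot bridge the full $x$-interval $[a,a+1]$; hence some vertical segment of $P_{2k+1}$ must have its $x$-coordinate in $(a,a+1)$. That vertical lies in the $x$-shadow of the horizontal segment of $P_{2k}$, so it is forced to sit entirely below height $h$, and being of unit length its bottom is below $h-1$. This is where the unit-segment hypothesis actually bites, and it is the step your version does not capture. Once you reroute the argument through the lowest horizontal segment, the awkward case analysis on free vertical endpoints disappears.
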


\begin{proof}
Note that at every even step, a horizontal segment is added strictly below the current lowest vertical one.

The second statement follows from the fact that the currently lowest horizontal segment can only be passed by a path from left to right, if this path has a vertical segment lying strictly below, otherwise the unit segment condition would be violated. The previous step ensures the depth increasing by 1.
\end{proof}

As a corollary of Proposition~\ref{prop:canon}, Theorem~\ref{thm:main} and a trivial observation, classes UGIG and GIG are NP-complete to recognize even with arbitrarily large girth.
\section{The Reductions}
\label{ssec:gadget}
First, we sketch the proof of Theorem~\ref{thm:main}, next we provide the details of the construction. We reduce planar 3-connected 3-SAT(4) shown to be NP-complete \cite{KratII}. This is a special version of 3-SAT where each variable occurs at most 4 times and the incidence graph (of the formula) is planar and 3-connected (3-connectivity implies a unique planar embedding up to the outer face). The incidence graph is bipartite with one part formed by variables, the other by clauses and an edge means a presence of a variable in a clause. We follow the ideas of \cite{KP,FMP}, i.e., for a planar embedding of the incidence graph, vertex representatives get represented by variable-gadgets or truth-splitters, clause representatives we replace by clause gadgets and the edges (of the incidence graph) we replace by pairs of paths whose left-right orientation represent the truth-assignment.
	
Variable-gadgets must keep the occurrences synchronized, clause-gadgets must be representable exactly for 7 satisfiable assignments. For the 1st case, as a variable gadget we take two vertices connected by an edge and each pair representing an occurrence stems from them to incur the situation of Figure~\ref{fig:varass}. A clause-gadget is depicted in Figure~\ref{fig:clausevar1}, dashed curves depict arbitrarily long paths. For the 2nd case we use the same clause-gadget, but to decrease the maximum degree in the variable-gadget, we stick the 1st and the 2nd occurrence together and we represent the 3rd and 4th by the truth-splitter as depicted in Figure~\ref{fig:truthsplitter} without vertex $c$.  For the 3rd case, we use the truth-splitter in Figure~\ref{fig:truthsplitter} (including vertex $c$) and a clause-gadget from Figure~\ref{fig:stringclause}.

\begin{figure}[htbp]
\centering
\includegraphics[width=5.5cm]{obrazky.22}
\includegraphics[width=6.5cm]{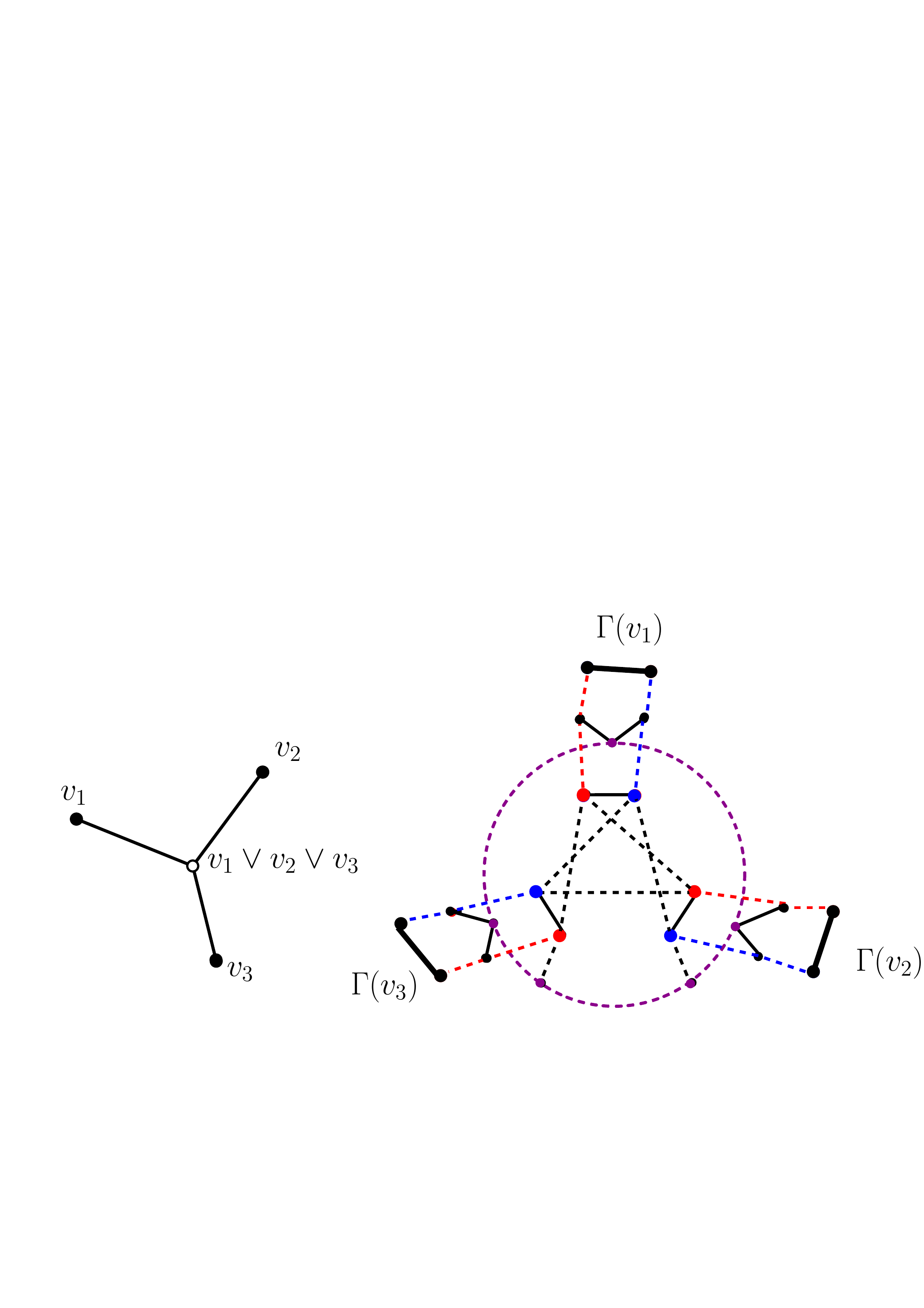}
\caption{Left: Incidence graph for a formula $(v_1\vee v_2\vee v_3)\wedge(\neg v_2\vee v_3\vee\neg v_4)\wedge (\neg v_1\vee v_2\vee\neg v_4)\wedge(\neg v_1\vee\neg v_3\vee v_4)$ (here we have identified the vertex names with the variables/clauses). Middle: Subgraph of the incidence graph for $v_1\vee v_2\vee v_3$. Right: The corresponding subgraph whose representability we explore. This subgraph shows the part inside the red dashed triangle in the left picture. Left-right orientation of blue and red vertices/paths determines the truth assignment
}

\label{fig:clausevar1}
\end {figure}

\begin{figure}[htbp]
\centering
\includegraphics[width=10cm]{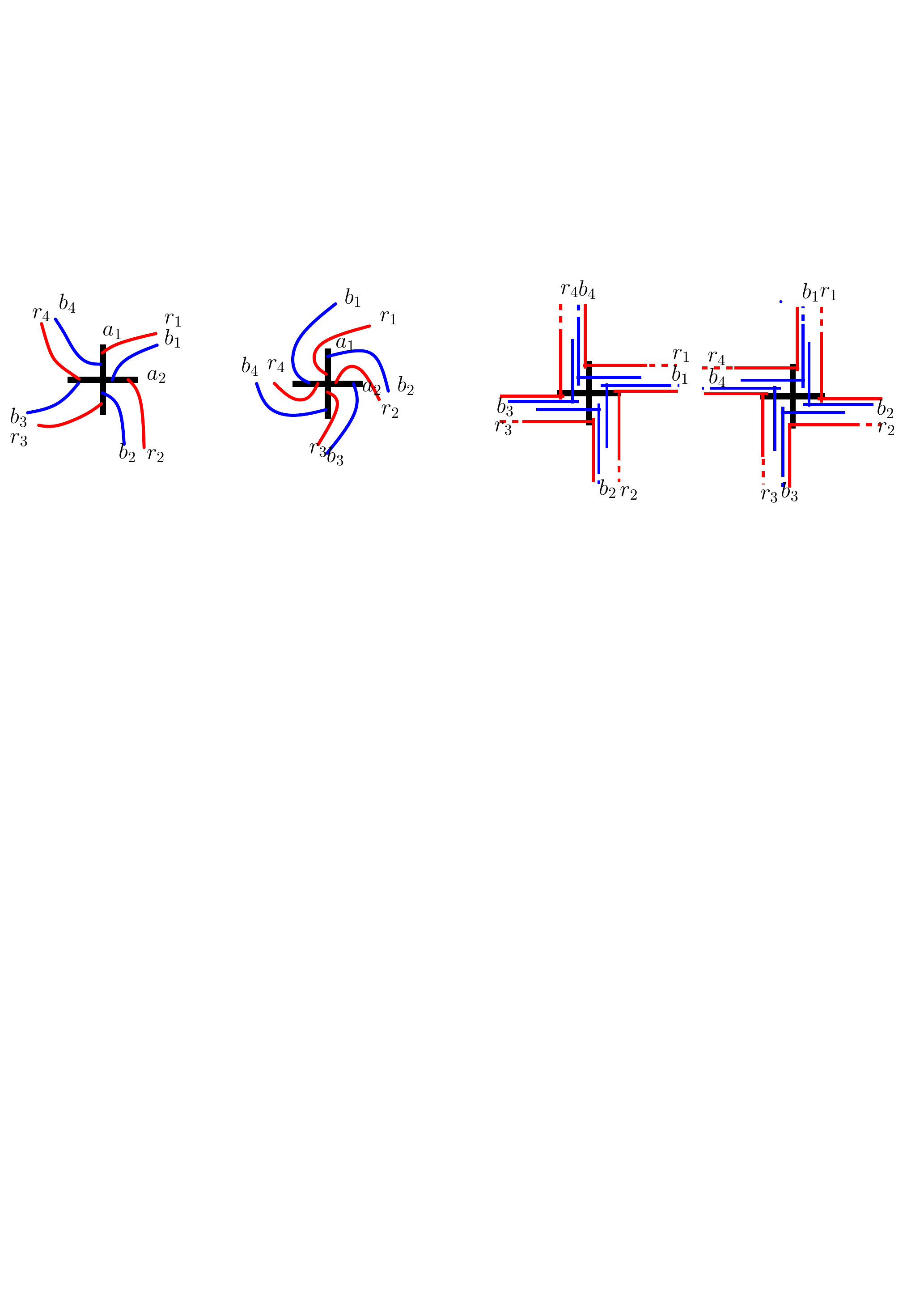}
\caption{The two distinct representations of the four occurrences incident to a variable gadget. The left pair is a pseudosegment representation, the right one a UGIG representation. In each pair, the left corresponds to the FALSE assignment, the right one to the value TRUE}
\label{fig:varass}
\end{figure}

\begin{figure}
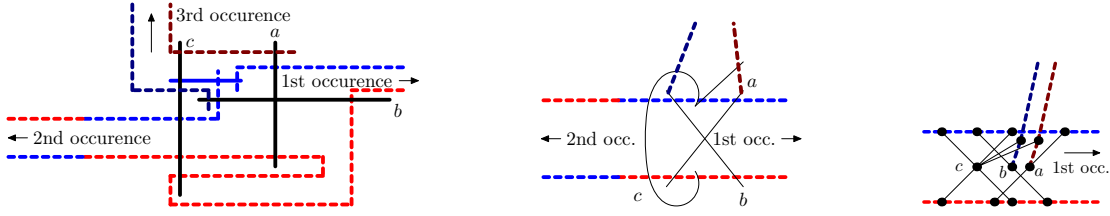

\includegraphics[width=5.5cm]{strings.1}\hfill\includegraphics[width=3.5cm]{strings.2}\hfill \includegraphics[width=2.5cm]{strings.3}~
\caption{Truth-splitter gadget. First occurrence always goes to the right, second to the left, third upwards. Left: GIG representation, middle: string representation, right: the appropriate graph}
\label{fig:truthsplitter}
\end{figure}

\subsubsection*{The gadgets}
We first describe the \textbf{clause gadget}, which is the same as in \cite{KP}. We show it on the right in Figure \ref{fig:clausevar1}.


In the Figure \ref{fig:clausevar1} (right side), straight segments represent one edge, dashed lines depict arbitrarily long paths whose length depends on the required girth. Furthermore, any two paths that are not explicitely listed as sharing a vertex do not intersect.

The three pairs of adjacent vertices inside the gadget correspond to the three variables of the clause, and the pairs of paths incident to them correspond to the occurrence gadgets, which we describe in more detail below. To explain the construction in Figure \ref{fig:clausevar1}, we denote the variables as 1st, 2nd and 3rd  in clockwise order, with the 1st variable being represented by the topmost adjacent pair of vertices.

 We color these six vertices red and blue, as follows: 

\begin{itemize}
\item Each adjacent pair consists of a blue and a red vertex.
\item For the third pair, the vertex incident to the path connected to the cycle is colored red if and only if the corresponding variable does not appear negated  in the clause.
\item For the second pair, the opposite is the case.
\item For the first (topmost) pair, consider the vertex $v$ connected to the vertex of the leftmost pair which is red if not negated. Then $v$ is colored red if and only if the corresponding literal is positive. 
\end{itemize}

To obtain the circular orientation of the variables in the clause gadget, consider the unique planar embedding of $\mathcal{B}(\mathcal{F})$, and pick the circular ordering of the edges connecting the clause vertex to its three variable vertices. Moreover, to adjust for the desired girth, the lengths of the paths inside the gadget, as well as the length of the cycle between two different pairs, can be chosen arbitrarily long. We further adjust the path lengths to ensure the clause subgraph is bipartite and that the red vertices belong to the same partition.

The \textbf{variable gadget} is simple: it consists of two adjacent vertices, which we denote by $a_1$ and $a_2$ in the sequel.

The \textbf{occurrence gadget} consists of a pair of non-intersecting paths connecting the pair of vertices of the variable gadget to the corresponding pair of vertices in the clause gadget. To determine the exact adjacencies, we assume the variable gadget of $x_1$ has all four occurrences (less then three is impossible, due to $\mathcal{B}(\mathcal{F})$ being 3-connected). Let now $C_1,C_2,C_3,C_4$ denote the clockwise circular order of the clauses that contain $x_1$ as a variable. Then, we connect $a_1$ to the red vertices of $C_1$ and $C_3$ and to the blue vertices of $C_2$ and $C_4$. Analogously, we connect $a_2$ to the blue vertices of $C_1$ and $C_3$ and to the red vertices of $C_2$ and $C_4$. As in the case of the clause gadget, these paths can be chosen long enough to ensure $G(\mathcal{F})$ has the desired girth and is bipartite. Note that since the plane embedding is rigid, the lengths of the edges of $\mathcal{B}(\mathcal{F})$ are fixed. Therefore, in $G(\mathcal{F})$, the lengths of the occurrence paths are chosen proportional to the length of the corresponding embedded edge of $\mathcal{B}(\mathcal{F})$.

\begin{nota}
In the sequel, we refer to paths connecting a red vertex in a clause gadget to a vertex in a variable gadget as to \emph{red paths}; moreover, we use the same description for paths connecting two red vertices. The definition of \emph{blue} paths is analogous.
\end{nota}

In the case only three clauses contain $x_1$ as a variable, we introduce a \textbf{dummy occurrence}. The dummy occurrence consists of a pair of paths that start from the vertices $a_1$ and $a_2$ of the variable gadget and  are connected to two other occurrence gadgets via another path, as in the Figure \ref{fig:dummy}. This latter path ensures that due to 3-connectivity in every plane pseudosegment representation of $G(\mathcal{F})$, the dummy occurrence is "fixed" between the two other occurrences to which it is connected.

\begin{figure}[htbp]
\centering
\includegraphics[width=8cm]{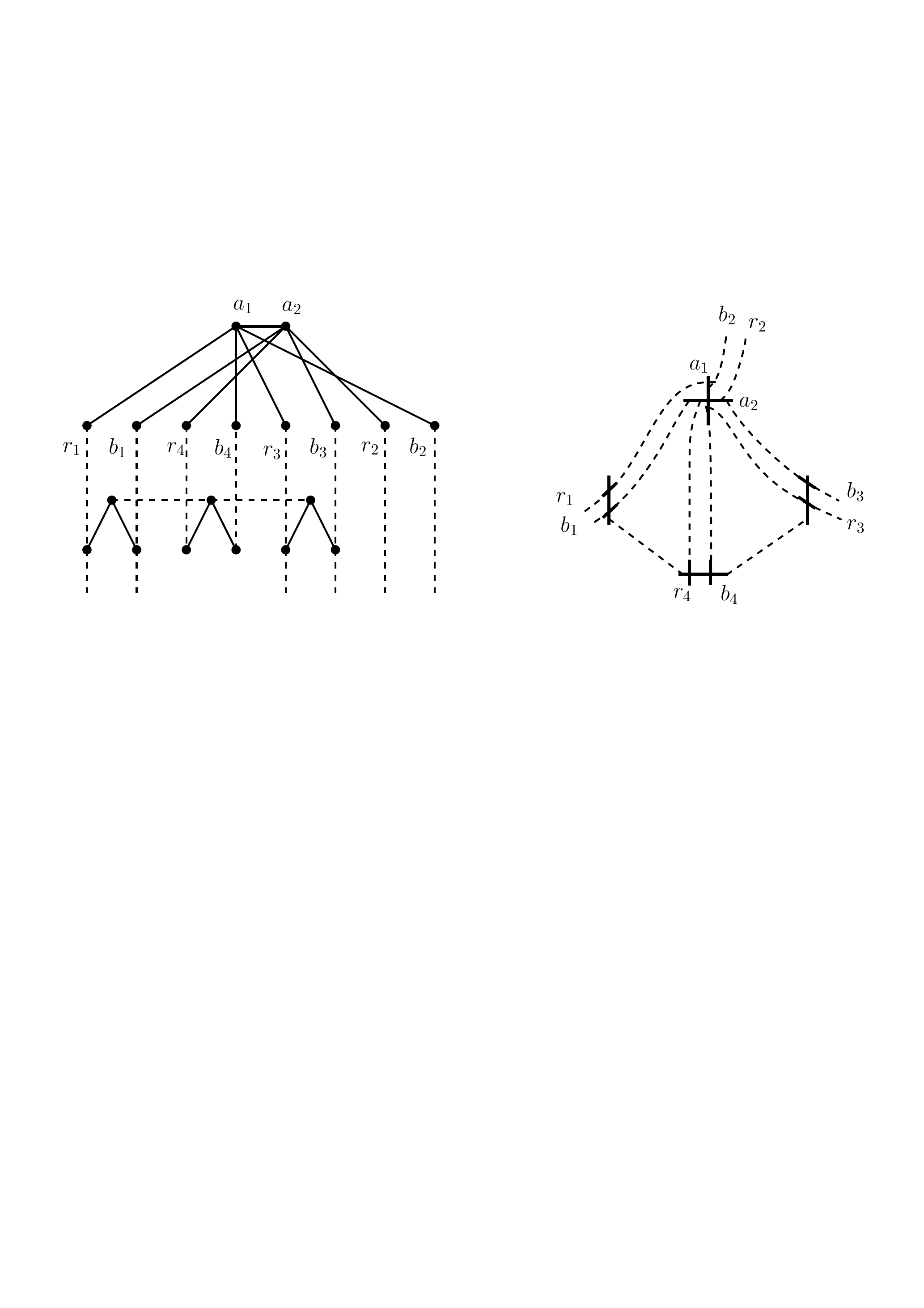}
\caption{Left: The dummy occurrence, consisting of the paths $r_4,b_4$ and the path that connects them to occurrences 1 and 3. Right: A segment representation of the dummy occurrence. For occurrence $i$, its red path is represented by $r_i$, its blue one by $b_i$.}
\label{fig:dummy}
\end {figure}

\subsection*{Variable assignments and the UGIG representability of $G(\mathcal{F})$ }

After having described the gadgets and thus the structure of $G(\mathcal{F})$, we proceed to show that:

\begin{thm}
 $G(\mathcal{F})$ is representable as a UGIG if and only if $\mathcal{F}$ is satisfiable. Moreover, if the latter is not the case, $G(\mathcal{F})$ cannot be represented even as an intersection graph of pseudosegments. 
\end{thm}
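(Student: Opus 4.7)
The plan is to prove the ``if'' and ``only if'' directions separately, with the ``only if'' direction strengthened to the pseudosegment setting. The overall structure mirrors the standard template for gadget reductions: a satisfying assignment allows us to build an explicit UGIG representation, while any pseudosegment representation can be decoded back into a satisfying assignment via the forced geometric behavior of the gadgets.

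For the ``if'' direction, given a satisfying assignment of $\mathcal{F}$, I would construct the UGIG representation along the fixed planar embedding of $\mathcal{B}(\mathcal{F})$ guaranteed by 3-connectivity. Each variable gadget $\{a_1,a_2\}$ is placed as a pair of unit segments whose orientation encodes the truth value, as on the right of Figure~\ref{fig:varass}. The occurrence paths are drawn as chains of axis-aligned unit segments following the embedded edges of $\mathcal{B}(\mathcal{F})$, with lengths chosen large enough to attain the required girth while preserving the bipartition parity with the red vertices in a common class. Inside each clause gadget, the satisfied literal supplies the geometric freedom needed to realize the configuration of Figure~\ref{fig:clausevar1} (right), while the remaining two literals can be routed regardless of their truth value. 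Dummy occurrences are represented as in Figure~\ref{fig:dummy}.

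For the ``only if'' direction, assume a pseudosegment representation of $G(\mathcal{F})$ exists. First, using the 3-connectivity of $\mathcal{B}(\mathcal{F})$ and the standard arguments of \cite{KP,FMP}, the combinatorial layout of all gadgets is forced up to reflection; in particular, the cyclic order of occurrences around each variable gadget must agree with the planar embedding, and the dummy occurrences are pinned in place between their neighbors. Next, the variable gadget $\{a_1,a_2\}$ admits essentially two combinatorial types of representation, corresponding to whether each occurrence pair ``goes left'' or ``goes right'' in the sense of Figure~\ref{fig:varass}; I interpret this binary choice as the truth value of the variable. The red/blue labeling of the clause gadget vertices was designed precisely so that the ``red-outward'' orientation corresponds to the literal evaluating to \textbf{true}, with negations swapping the roles. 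Extracting a truth value for each variable and verifying that every clause is satisfied then reduces to the analysis of a single clause gadget.

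The main obstacle, and the step I expect to carry the real content of the argument, is showing that the clause gadget of Figure~\ref{fig:clausevar1} admits no pseudosegment representation when all three of its literals evaluate to \textbf{false}. The approach is an exhaustive topological case analysis over the eight truth-assignments to the three clause variables: seven of them are handled by explicit realizations that mirror the UGIG construction above, while the all-false case is obstructed because the three pairs of occurrence paths are forced, by the internal cycle of the gadget together with the red/blue convention, into a configuration in which some pair of non-adjacent curves must cross, violating the pseudosegment property via a Jordan-curve argument. The girth and bipartiteness constraints are preserved by choosing path lengths appropriately, which is a routine adjustment but one that must be carried out compatibly with the parity condition on the red vertices.
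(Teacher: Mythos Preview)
Your proposal is correct and follows essentially the same route as the paper: the forward direction exhibits explicit UGIG realizations of the clause gadget for each of the seven satisfying patterns, and the backward direction reads a truth assignment off any pseudosegment representation via the forced cyclic order at each variable gadget, reducing everything to the unrealizability of the all-false clause configuration. The only refinement worth noting is that the paper's obstruction in the all-false case is not phrased as ``some non-adjacent pair must cross'' but as an isolation argument---after labeling the four half-pseudosegments at each literal crossing, one tracks where the anchoring and connecting paths are forced to attach and shows that in every subcase one of the crossing pseudosegments ends up separated from a pair it must reach; this is the concrete form your Jordan-curve argument should take.
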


\begin{proof}
We start with a series of observations, concerning a pseudosegment representation of $G(\mathcal{F})$:

\begin{obs} \label{obs:fixorder}
Each of the four pairs of occurrence paths incident to a variable gadget blocks the visibility of the intersection point of the segments of $a_1$ and $a_2$ from exactly one quadrant (otherwise, we either need less occurrences, or obtain path intersections, a contradiction). The uniqueness of the planar embedding of the contracted graph $\mathcal{B}(\mathcal{F})$ fixes the clockwise ordering of the occurrences. Therefore, changing the clockwise order of the two paths in an occurrence, forces the same change in the representation of all of the other three occurrences. 
\end{obs}


\begin{obs} 
 All intersections between occurrence paths belonging to different clauses can be avoided. Indeed, if such intersections are unavoidable, they remain so even after contracting the representation of $G(\mathcal{F})$ back to the one of $\mathcal{B}(\mathcal{F})$. However, the hypothesis states that the latter is plane. Hence, given a variable-consistent clockwise ordering of the paths in an occurrence, a pseudosegment representation is impossible if and only if there exists a clause gadget which becomes impossible to represent.
\end{obs}

\begin{obs} \label{obs:allpos}
We can assume without loss of generality that all literals are positive. Indeed, any UGIG representation of a clause gadget containing a negated literal is equivalent to an UGIG representation of the same clause gadget with all literals positive, but opposite variable assignment for all originally negative ones. This corresponds in practice to swapping the blue and red colors in the corresponding pair of crossing segments (and occurrence paths, respectively).
\end{obs}

$"\Rightarrow:"$ 
Assume $\mathcal{F}$  is satisfiable and fix a variable assignment that evaluates to TRUE. For each variable gadget, choose the ordering of the paths in the occurrence pairs such that the red path is clockwise before the blue path if and only if the value of the variable is assigned to true. First, note that the variable and occurrence gadgets admit unit grid intersection representations in all cases. Together with first two observations above, it follows that all we need to check  now is that the clause gadgets also admit unit grid intersection representations compatible with the representations of the occurrence gadgets. That is to say, the three pairs of occurrence paths leave it in a consistent order to that prescribed by their entering the variable gadgets.

 For each clause, there exist seven combinations  of variable assignments that satisfy it.  Each such variable assignment corresponds to an ordering of the red and blue occurrence paths, within each of the three pairs. By our construction and using Observation \ref{obs:allpos}, a literal evaluates to true if and only if the red occurrence path appears at the left of the blue  path. The Figure \ref{fig:allatonce1} shows how it is possible to create a unit grid representation in the seven cases when the clause gadget evaluates to true, and none of the variables is negated.

\begin{figure}
\includegraphics[width=12cm]{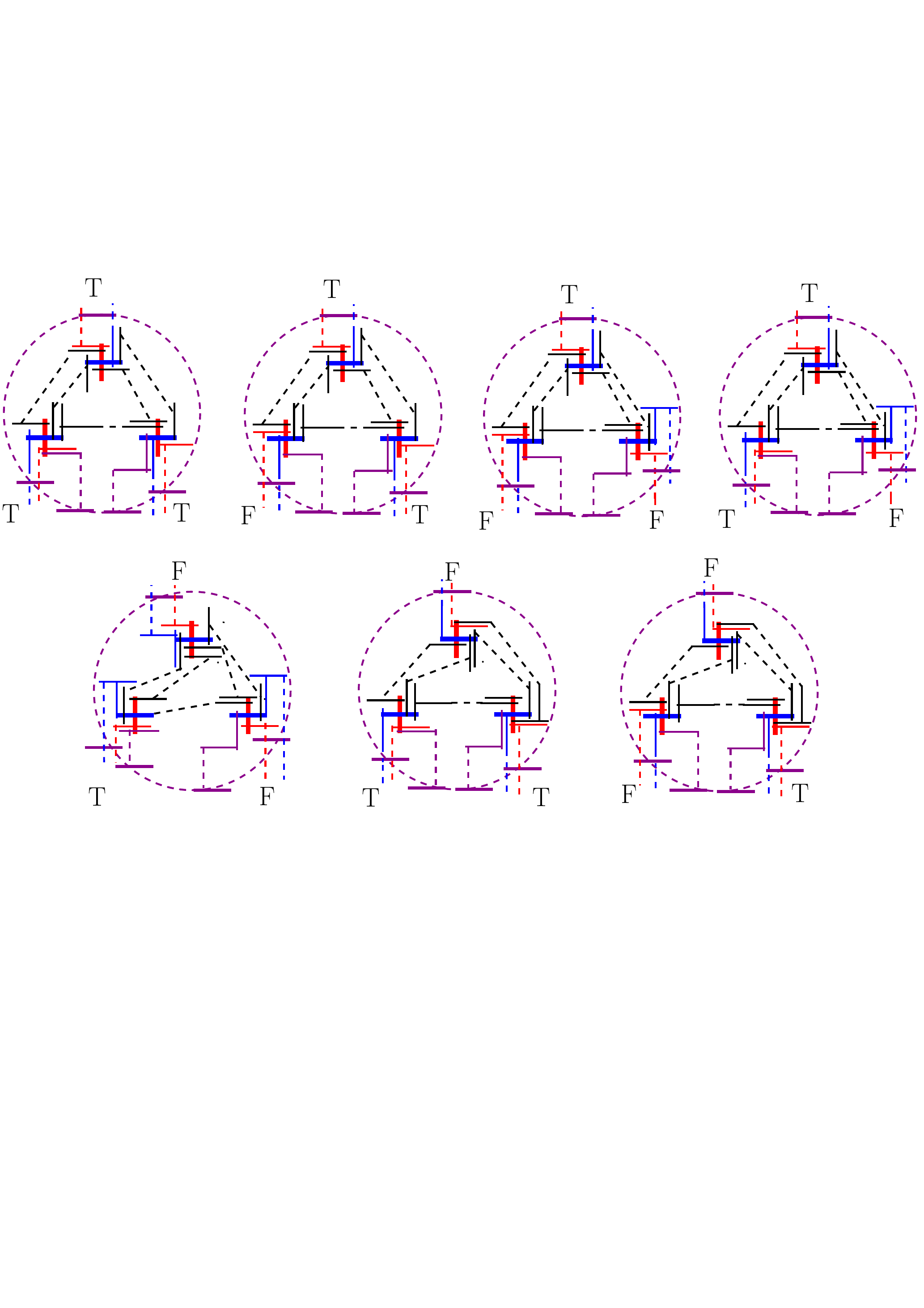}
\caption{The clause gadgets admit UGIG representations for all variable assignments that make the clause evaluate to true. The violet solid segments correspond to those segments of surrounding circles where the occurrence gadgets are leaving the clause gadgets}
\label{fig:allatonce1}
\end {figure}

$"\Leftarrow:"$ Assume $ \mathcal{F}$ is not satisfiable, but $G(\mathcal{F})$ admits a pseudosegment representation. From Observation \ref{obs:fixorder} it follows that for each variable gadget, either all the red occurrences are clockwise before the corresponding blue ones, or all clockwise after. In this representation, there necessarily exists a clause where all the blue paths are represented clockwise before the paired red ones (after swapping colors for the negated literals, cf. Observation \ref{obs:allpos}). Indeed, if this is not the case, then, by assigning the values of variables like in the proof of the forward direction, we obtain a truth assignment for $ \mathcal{F}$ that makes it satisfiable, a contradiction.

Consider now the clause $C$ where for all three variables the blue paths must occur clockwise before the red paths, as sketched in Figure \ref{fig:notposs1}. Recall that in this representation we assign to each literal a pair of intersecting pseudosegments. Starting from the top and proceeding clockwise, we denote these literals by $L_1,L_2,L_3$, such that the two anchoring paths start from the red pseudosegment of $L_3$ and the blue one of $L_2$. In the sequel, we will abuse the notation and denote by $L_i$ both the literal and the corresponding pair of pseudosegments.

First, we consider the red-blue intersecting pair that corresponds to $L_3$. The proper intersection of two pseudosegments creates four half-pseudosegments sharing the intersection point as an endpoint. We label these half-segments clockwise by I, II, III, IV, such that in this representation the blue occurrence is incident in IV. Then, the path connecting the red pseudosegment to the cycle must be incident to III: incidence to I implies, if the path order is preserved, that the blue segment becomes isolated from those belonging to the other two crossing pairs, a contradiction. For the same reason, the red occurrence path is incident in I. Symmetrically, we can label the segment halves for the  crossing pair of $L_2$ such that the path connecting the pair to the cycle is incident in III, the red occurrence in II and the blue occurrence in I.

\begin{figure}[htbp]
\centering
\includegraphics[width=4.5cm]{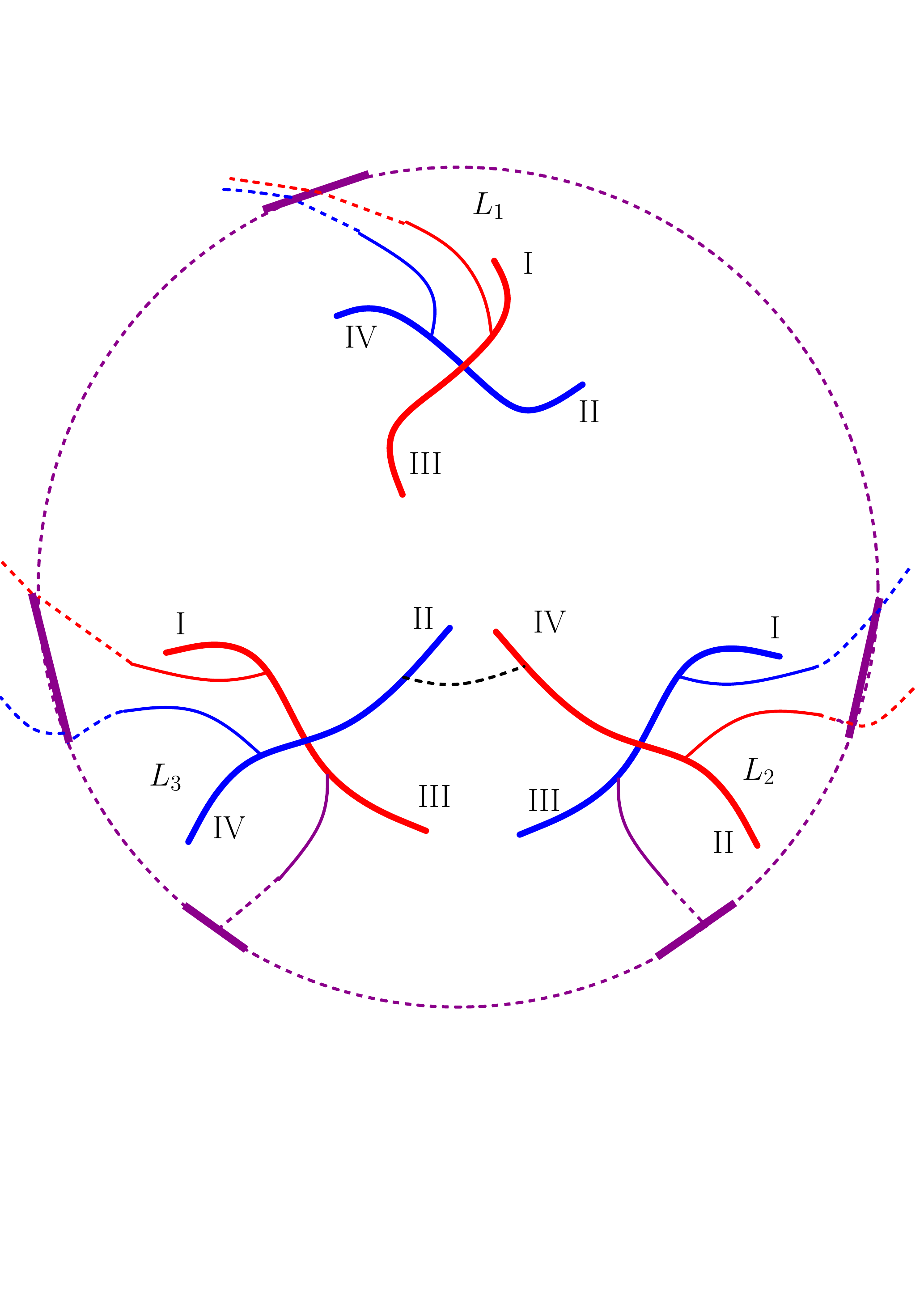}
\caption{The placement of the occurrence and anchoring paths with respect to the half-segments of the crossings in an unsatisfied clause gadget}
\label{fig:notposs1}
\end {figure}

Now, in order to connect the red  pseudosegment of $L_2$ with the blue one of $L_3$, this path needs to start in II of $L_2$ and end in IV of $L_3$, since the red II of $L_3$ and blue IV of $L_2$ have no visibility. Then, the path connecting the blue pseudosegment of $L_3$ to the red one of $L_2$ must have III-III as start and ending points, since the only other option, I-I, isolates the blue $L_3$ and red $L_2$ pseudosegment from the crossing pair of $L_1$, a contradiction. The paths are also represented in Figure \ref{fig:notposs1} above. We now need to add four more connecting paths, ending up in one of the following cases:

 \begin{itemize}
 \item If the $L_1 -L_3$ red path starts in the I half-segment of  $L_1$ and ends in the I half-segment of $L_3$, then the $L_1$ blue pseudosegment is isolated from  $L_2$.
 \item If the $L_1 -L_2$ red path starts in the I half-segment of  $L_1$ and ends in the IV half-segment of $L_3$, then the $L_2$ blue pseudosegment is isolated from  $L_3$.
 \item If the $L_1 -L_3$ red path goes from III to I and the $L_1 -L_2$ red path from III to IV, then the $L_2$ blue pseudosegment is isolated from $L_1$. 
 \end{itemize}

 We can therefore conclude that whenever $\mathcal{F}$ is not satisfiable, $G(\mathcal{F})$ does not admit a pseudosegment representation.

\end{proof}

For the degree in the construction, we can see that the highest degree occurs in the variable gadget where we need vertices of degree 5. In the next section we briefly mention how to get rid of this problem.

\subsection*{GIG}

We use suitable parts of the previous construction, i.e., we just change the variable gadget (for a gadget with lower degree). In order to represent occurrences of a variable (that are consistent with the truth-assignment) we pick the occurrence gadgets or 1st and 2rd occurrence (pairs of paths) and stick them together. We label the occurrences so that the remaining is the 3rd, if there remain two (3rd and 4th), the 3rd is above paths representing 1st and 2nd while the 4th is below. If we stick the paths (representing 1st and 2nd occurrence) like shown in Figure~\ref{fig:truthsplitter}, right picture without vertex $c$. In this way we keep them synchronized. The depicted construction enforces the 3rd occurrence (steming upwards from this pattern) to be synchronized with the 1st occurrence. For the 4th occurrence we perform the same split from the 2nd occurrence. Vertex $c$ is so called {\em degradation gadget} that is used in the next section. This construction uses only vertices of maximum degree 4.

This implies the 2nd result, i.e.:
\begin{thm}
Between class of grid intersection graphs and  pseudosegment graphs no polynomially recognizable class can be sandwiched even if we restrict on graphs with arbitrarily large girth and maximum degree at most 4.
\end{thm}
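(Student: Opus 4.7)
The plan is to adapt the proof of the previous theorem, retaining the clause gadgets, occurrence gadgets, and dummy occurrences unchanged, and replacing only the variable gadget by the truth-splitter of Figure \ref{fig:truthsplitter} (without the vertex $c$), which lowers the maximum degree from 5 to 4. The overall reduction is still from planar 3-connected 3-SAT(4), and the graph $G(\mathcal{F})$ is built as before, except that each of the four occurrences of a variable is routed out of the truth-splitter rather than out of a single adjacent pair $a_1,a_2$.

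The first step is to verify the degree bound. In the truth-splitter without $c$, each internal vertex is incident to at most two interior edges plus at most two outgoing occurrence paths, for a total of four; the clause gadget, occurrence paths, and dummy occurrence are unchanged and already have maximum degree at most 4. Hence every vertex of $G(\mathcal{F})$ has degree at most 4.

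The main obstacle is to prove the analogue of Observation~\ref{obs:fixorder} for the new gadget: in any pseudosegment representation, the four pairs of occurrence paths leaving a single variable gadget must all carry the same clockwise red-before-blue (or blue-before-red) ordering. For the UGIG construction this was automatic because all four pairs emanated from the crossing of $a_1$ and $a_2$; in the truth-splitter, I would argue that the pair of pseudosegments into which occurrences 1 and 2 are merged rigidly transmits its orientation to the single pair that continues upward as occurrence 3 (and analogously for occurrence 4). The key point is that a single pseudosegment-crossing plus its propagating path can enter each of its four surrounding regions in at most one way without forcing extra intersections, so once the clockwise order is fixed for the merged 1st--2nd occurrence, it is determined for the split 3rd and 4th. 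This is exactly the role that the truth-splitter plays in Figure~\ref{fig:truthsplitter}, and it requires only 3-connectivity of $\mathcal{B}(\mathcal{F})$ plus the unique plane embedding to exclude any topological inversion.

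Once synchronization is established, the rest follows the previous theorem almost verbatim. For the forward direction, a satisfying assignment yields a GIG realization of each truth-splitter (left picture of Figure~\ref{fig:truthsplitter}) together with the clause realizations of Figure~\ref{fig:allatonce1}, giving a GIG representation of $G(\mathcal{F})$. For the converse, if $\mathcal{F}$ is not satisfiable then any purported pseudosegment representation of $G(\mathcal{F})$ induces, via the synchronization property above, a truth assignment in which some clause must be represented with all three blue paths clockwise-before the red ones, which was already shown impossible via Figure~\ref{fig:notposs1}. Finally, all internal paths of every gadget and all occurrence paths may be padded to any desired length, so arbitrarily large constant girth is achieved without affecting the combinatorial arguments, yielding the stated sandwich hardness between GIG and pseudosegment graphs under maximum degree 4 and arbitrarily large girth.
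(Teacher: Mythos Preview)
Your proposal is correct and follows essentially the same approach as the paper: keep the clause, occurrence, and dummy gadgets unchanged, merge the 1st and 2nd occurrence pairs into a common pair of paths, and use the truth-splitter of Figure~\ref{fig:truthsplitter} (without $c$) to branch off the 3rd occurrence from the 1st and the 4th from the 2nd, yielding maximum degree~4. Your discussion of the synchronization analogue of Observation~\ref{obs:fixorder} is in fact more explicit than the paper's, which simply asserts that ``the depicted construction enforces the 3rd occurrence \ldots\ to be synchronized with the 1st occurrence'' and leaves the rest to the previous proof.
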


\subsection*{String graphs}

\begin{thm}
Between class of grid intersection graphs and string graphs no polynomially recognizable class can be sandwiched even if we restrict on graphs with arbitrarily large girth and maximum degree at most 8.
\end{thm}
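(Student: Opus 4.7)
The plan is to emulate the template used for the UGIG and GIG theorems, with two changes tailored to the more permissive class of string graphs: first, use the truth-splitter of Figure~\ref{fig:truthsplitter} \emph{including} the degradation vertex $c$ as the variable gadget, which simultaneously bounds the degree and rigidifies the local topology; second, replace the clause gadget by the string-specific variant in Figure~\ref{fig:stringclause}. Call the resulting graph $G(\mathcal{F})$, with all dashed paths subdivided sufficiently often to guarantee bipartiteness and the prescribed girth. A direct degree count in both gadgets would confirm that the maximum vertex degree is $8$ (attained inside the string clause gadget), while the occurrence paths contribute only degree-$2$ vertices.

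For the forward direction, ``$\mathcal{F}$ satisfiable $\Rightarrow$ $G(\mathcal{F})$ is a GIG,'' I would fix a satisfying assignment, orient each truth-splitter according to the value of its variable (both orientations admit the axis-aligned drawing shown in Figure~\ref{fig:truthsplitter}), and for each clause pick one of the seven satisfying local patterns to realize the clause gadget as in Figure~\ref{fig:allatonce1}. Since $\mathrm{GIG}\subseteq\mathrm{STRING}$, this immediately yields a representation in every sandwiched class.

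The real obstacle is the backward direction: assuming that $G(\mathcal{F})$ has a string representation, recover a satisfying assignment for $\mathcal{F}$. Strings are topologically much more flexible than pseudosegments, so the quadrant argument of Observation~\ref{obs:fixorder} and the isolation arguments used in the pseudosegment theorem do not apply a priori. The role of the degradation vertex $c$ is precisely to restore them locally. The key technical lemma I would prove is that, inside every truth-splitter, vertex~$c$ together with its attachments forces the two strings of $a_1$ and $a_2$ to cross transversely at a single point, with the three outgoing occurrences and $c$ sitting in four distinct ``quadrants''; the proof is a standard disk-decomposition argument, where each string is thickened to a narrow strip and $c$ is identified with a small disk whose cyclic order of attachments is prescribed by the planar embedding of $\mathcal{B}(\mathcal{F})$.

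Once the truth-splitters behave like proper pseudosegment crossings, the rotational order of occurrence pairs around each variable is forced by the rigid planar embedding of $\mathcal{B}(\mathcal{F})$, and the red/blue colouring induces a candidate truth assignment exactly as in the pseudosegment proof. It then remains to check that the string clause gadget of Figure~\ref{fig:stringclause} is \emph{not} representable, even as a string graph, when the clause is unsatisfied, i.e.\ when all blue paths leave clockwise before their paired red ones. I would do this by replaying the three-case isolation analysis from the pseudosegment theorem inside the disk bounded by the clause's outer cycle: once the quadrants at each of the three intersection points are fixed and the two anchoring paths occupy their forced half-segments, every topological way of routing the remaining red-to-red and blue-to-blue connecting paths separates one coloured string from the cycle, contradicting the existence of a string representation. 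Combining both directions with the degree and girth bounds completes the sandwiching reduction and hence the theorem.
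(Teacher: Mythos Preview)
Your overall architecture---truth-splitter with $c$, the string clause gadget of Figure~\ref{fig:stringclause}, GIG drawings for the forward direction, and an isolation argument for the backward one---matches the paper. The forward direction and the degree/girth bookkeeping are fine.

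The backward direction, however, has a genuine gap, and it stems from misidentifying where the topological rigidification actually happens. You locate the key obstacle at the \emph{variable} gadget and propose a lemma that vertex $c$ forces $a_1,a_2$ to meet in a single transverse crossing with a clean quadrant structure. The paper does not argue this way at all; its entire backward argument is about the \emph{clause} gadget. The string clause gadget in Figure~\ref{fig:stringclause} carries extra structure absent from the pseudosegment version: around each of the three literal crossings there is a small cycle (drawn as a ``black dashed box''), and the green curves threading through them are the objects whose behaviour must be controlled. The paper's argument is a Jordan-curve analysis showing that (i) each green string is forced to pass through its box, and (ii) no two green strings can meet outside their common box; together these force the three literal crossings to behave, locally, exactly like pseudosegment crossings with four well-defined half-strings. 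Only \emph{after} this is established does the three-case isolation analysis of Figure~\ref{fig:notposs1} go through.

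Your proposal jumps straight to ``once the quadrants at each of the three intersection points are fixed'' without supplying any mechanism for fixing them. For strings this is the whole difficulty: two strings may cross many times, so ``half-segments'' and ``quadrants'' are not a priori meaningful at the literal crossings, and the isolation argument cannot simply be replayed. Whatever role $c$ plays at the variable side, it does nothing to constrain the topology inside the clause; you still need the black-box Jordan-curve argument (or an equivalent) there.
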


\begin{proof}
Figure~\ref{fig:truthsplitter} shows GIG representation of the variable gadget and of truth-splitter. Occurence gadgets (pairs of paths) are represented in the same way as for UGIGs, so the only problem is to show the representability of clause gadget for satisfiable assignments and irepresentability for unsatisfiable assignment. The former is just an exercise either on homeomorphism if we try to stretch and align representation in Figure~\ref{fig:stringclause}, or we may extend Figure~\ref{fig:allatonce1}.

The remaining part is to explain why non-representable version of the clause gadget admits not even a string representation. In the text, we promise that strings depicted on the left picture in Figure~\ref{fig:stringclause} by straight green segments cannot make use of further (mutual) intersections. First observation is that due to the design of the gadget, they cannot be represented outside the dashed black boxes (drawn suggestively around their intersection-point). Each green segment intersects the box in two points (i.e., it intersects two curves representing its boundary). When we try to find a path (for a possible string) that would connect these two points, we realize that these two points are separated by a Jordan curve (formed by curves that must not be intersected with the green string) and therefore the green string has to pass through the box. That is why our drawing with the intersection of green segments inside the black dashed box is not misleading. When we pick any pair of these green segments and try to make the appropriate curves to intersect outside the black dashed square, again, we find a Jordan circle that cannot be crossed by either (green) curve and therefore they cannot mutually intersect outside the black dashed box (drawn on the picture around their intersection). That is why we turn up in a similar situation as is depicted in Figure~\ref{fig:notposs1}; in fact, around the intersections of red and blue curves, there is the representation of the black (dashed) box and this box must not be accessed by any path present in the segment version of the gadget. Therefore, the paths connecting red and blue vertices (in Figure~\ref{fig:notposs1}) can use just four "half-strings" (that mutually do not intersect) and because of the same reason as in the case of pseudosegments they cannot be represented.

Maximum degree in the construction is 8.
\end{proof}

\begin{figure}
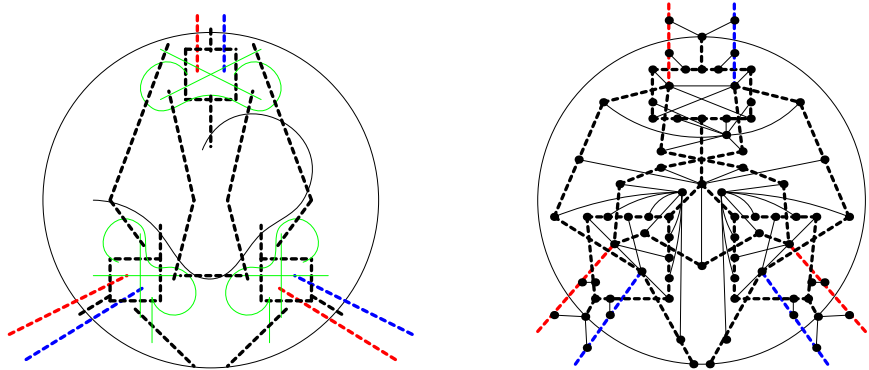

\hfill\includegraphics[width=5.4cm]{strings.4}\hfill
\includegraphics[width=4.4cm]{strings.5}\hfill~
\caption{Clause gadget for string graphs. Note that the vertex with maximum degree corresponds to the top green curve (left: representation, right: graph)}
\label{fig:stringclause}
\end{figure}

\section{Conclusion and Open Problems}
We have defined a new measure that can influence the hardness of the recognition problem for graph classes whose recognition is generally hard and we tried to get as tight results for particular graph classes as possible. For several classes, we have almost succeeded, usually 1 or 2 values are unclear. This fact motivates the open problems:

What is the complexity of recognizing for string graphs, segment graphs and GIGs (and preferably all classes between them) when we restrict our attention to graphs with maximum degree 3? How does the answer change when we restrict only to graphs with large girth (note that our reductions are meeting both criteria simultaneously)? We may also ask, what is the complexity of recognizing UGIGs with maximum degree 3 or 4 (both are open so far, our reduction works for degree at least 5). Further we may ask the same problem for string graphs with arbitrarily large girth and maximum degrees from 3 to 7.

For what graph classes does this new measure help more and which classes remain hard even with low degrees? What happens, e.g., with polygon-circle graphs (that can be polynomially recognized when restricted on graphs with girth at least 6) where the reduction requires vertices of high degree? It seems \cite{CG} that even any class between UGIGs and pseudosegment graphs remains NP-hard to get recognized even if the underlying graphs is almost planar (i.e., after removal of just one vertex it becomes planar) even for graphs with arbitrary girth (like our construction does). However, that construction requires one vertex to have a degree linearly large. What happens if we require all degrees to be bounded by a constant?

What restrictions do we have to apply for classes between GIGs and pseudosegments to make them polynomially recognizable? 

\bibliographystyle{acm}
\bibliography{ugig3}
\end{document}